\newtheorem{remark}{Remark}
\newtheorem{corollary}{Corollary}
\newtheorem{proposition}{Proposition}
\def\BibTeX{{\rm B\kern-.05em{\sc i\kern-.025em b}\kern-.08em
		T\kern-.1667em\lower.7ex\hbox{E}\kern-.125emX}}
\begin{document}

	\title{Asynchronous Cell-Free Massive MIMO-OFDM: Mixed Coherent and Non-Coherent Transmissions }
	
		\author{\IEEEauthorblockN{Guoyu Li,~\IEEEmembership{Student Member,~IEEE},~Shaochuan Wu,~\IEEEmembership{Senior Member, IEEE},\\Changsheng You,~\IEEEmembership{Member,~IEEE},~Wenbin Zhang,~\IEEEmembership{Member, IEEE}, and~Guanyu Shang}

		\IEEEcompsocitemizethanks{\IEEEcompsocthanksitem This work was supported by the National Natural Science Foundation of China under Grant 62271167. (\emph{Corresponding author: Shaochuan Wu.}) } 
		\IEEEcompsocitemizethanks{\IEEEcompsocthanksitem G. Li, S. Wu and W. Zhang are with the School of Electronics and Information Engineering, Harbin Institute of Technology, Harbin 150080, China (email: lgy@stu.hit.edu.cn; scwu@hit.edu.cn; zwbgxy1973@hit.edu.cn). 
		C. You is with the Department of Electronic and Electrical  Engineering, Southern University of Science and Technology (SUSTech), Shenzhen 518055, China (e-mail: youcs@sustech.edu.cn).
		G. Shang is with the Innovation Photonics and Imaging Center, School of Instrumentation Science and Engineering, Harbin Institute of Technology, Harbin, China (email: shangguanyu95@163.com). } 
		
		
	}
	

	\maketitle
	
	\begin{abstract}
	
	In this letter, we analyze the performance of asynchronous cell-free multiple-input multiple-output orthogonal frequency division multiplexing (CF mMIMO-OFDM) system with mixed coherent and non-coherent transmission approaches.
	To this end, we first obtain the achievable downlink sum-rate for the mixed coherent and non-coherent transmissions, and then provide a closed-form expression for the case with the maximum ratio precoding. Subsequently, an efficient access point (AP) clustering algorithm is proposed to group APs into a set of coherent clusters. Numerical results demonstrate that the mixed coherent and non-coherent transmissions can effectively improve the sum-rate of CF mMIMO-OFDM systems under asynchronous reception.

	\end{abstract}
	
	\begin{IEEEkeywords}
		cell-free massive MIMO, mixed coherent and non-coherent transmissions, asynchronous reception.
	\end{IEEEkeywords}

	\section{Introduction}

	As one of the key candidate technologies for the sixth-generation wireless networks, cell-free (CF) massive multiple-input multiple-output (mMIMO) can effectively eliminate cell boundaries and provide users with higher spectral efficiency~\cite{10422885}. In CF network architecture, a large number of access points (APs) coherently transmit the data stream to each user, achieving high data rates~\cite{7827017}. The advantage of coherent transmission suggests that even if the APs have different channel gains to the users, distributing the transmit power across multiple APs is more beneficial than transmitting only from the AP with the strongest channel~\cite{8809413}. 

	Although jointly coherent transmission generally provides higher transmission rates, it imposes strict requirements on the availability of perfect synchronization in CF mMIMO networks~\cite{10684238}, which, however, is practically challenging.  Specifically, due to different locations of APs and users in CF mMIMO networks, there inevitably exists time differences in data arrivals. 
	This asynchronous reception situation not only hinders the acquisition of accurate channel state information (CSI) but also introduces quantized phase shifts, resulting in coherent transmission impossible~\cite{10032129, 10319677}.
	For example, the authors in~\cite{10319677} shown that the quantized phase shift introduced by asynchronous reception can seriously reduce the CF mMIMO orthogonal frequency division multiplexing (OFDM) system achievable rate.
	Therefore, it is necessary to find efficient solutions to operate CF mMIMO-OFDM networks in the presence of asynchronous reception.

	Recently, the research~\cite{10279001, 10459246} has shed light on a viable solution. 
	The mixed coherent and non-coherent transmissions proposed in~\cite{10279001} can effectively cope with the phase misalignment problem in multi-central processing unit (CPU) CF mMIMO systems. Specifically, the authors assumed that APs connected to the same CPU are phase-aligned, which sends data coherently. Meanwhile, non-coherent transmission performs between APs that connect to different CPUs. 
	Moreover, the authors considered the user-centric approach and proposed extensions of existing clustering algorithms for multi-CPU CF mMIMO systems. 
	However, the above assumption does not take into account the practical phase misalignment problem, that is, APs connected to the same CPU may also have different phases. 
	To this end, the authors in~\cite{10459246} proposed an AP clustering algorithm to group the APs into phase-aligned clusters, solving the practical problem of phase misalignment. However, the above AP clustering schemes do not solve the asynchronous reception problem considered in~\cite{10032129} and~\cite{10319677}. The reason is that in asynchronous reception, there are generally different quantized phase shifts between the same AP and each user, resulting in different coherent clusters for each user.
	Moreover, the performance analysis of mixed coherent and non-coherent transmissions in addressing the impact of asynchronous reception on CF mMIMO-OFDM is lacking in the existing literature.
	
	
	Motivated by the observations discussed, we analyze the performance of mixed coherent and non-coherent transmission methods in mitigating the impact of asynchronous reception on CF mMIMO-OFDM systems. To this end, we first analyze the achievable downlink sum-rate for mixed coherent and non-coherent transmissions under asynchronous reception, and then obtain a closed-form expression for the case with the maximum ratio (MR) precoding. Furthermore, we propose an efficient AP clustering algorithm to group APs into a set of coherent clusters. Numerical results verify the effectiveness of the proposed methods. 
	


	\section{System Model}

	We consider a CF mMIMO-OFDM system, where $Q$ $M$-antenna APs and $K$ single-antenna users are randomly located in a large geographical area. 
	Each AP is connected to a CPU, via a wire or wireless fronthaul. 
	For each OFDM symbol, the sample length is $N_{\mathrm{OFDM}}= N+ N_{\mathrm{CP}}$, where $N$ is the number of subcarriers, and $N_{\mathrm{CP}}$ is the cyclic prefix (CP) length in samples. The sampling period and subcarrier space of the OFDM symbol are $T_s$ and $\Delta f$, respectively.
	
	\subsection{Channel Models}
	
	\subsubsection{Block-fading Channel Model}

	We assume the block-fading model throughout this letter, where the channel fading is time-invariant and frequency-flat in each coherence block (CB). The entire time-frequency resource is divided into $R_{\mathrm{CB}} = T_{\mathrm{CB}} N_{\mathrm{CB}}$ CBs, where $T_{\mathrm{CB}}$ and $N_{\mathrm{CB}}$ are the number of CBs in time and frequency, respectively. For an arbitrary CB $r \in \left\{ 1, 2, \cdots, R_{\mathrm{CB}} \right\}$, there contains $N_{\mathrm{sub}} = N / N_{\mathrm{CB}}$ consecutive subcarriers and $N_{\mathrm{T}}$ consecutive OFDM symbols. Without loss of generality, we conduct the performance analysis by studying a single statistically representative CB $\mathfrak{B}_{r}$. In CB $\mathfrak{B}_{r}$, we assume that the index of the first subcarrier is $n_{1}$. 
	At this time, the channel response vectors between user $k$ and AP $q$ in CB $\mathfrak{B}_{r}$ are identical i.e., $\mathbf{h}_{qk,n_{1}} = \cdots = \mathbf{h}_{qk,n_{1}+N_{\mathrm{sub}}-1} = \mathbf{h}_{qk}^{\mathfrak{B}_{r}}$, which is modeled by correlated Rayleigh fading as
	\begingroup\makeatletter\def\f@size{10}\check@mathfonts
		\begin{equation}
			\begin{aligned} \label{Channel_model}
				\mathbf{h}_{qk}^{\mathfrak{B}_{r}} \sim \mathcal{N}_{\mathbb{C}} \left( \mathbf{0}_{M},\mathbf{R}_{qk} \right),
			\end{aligned} 
		\end{equation}
	\endgroup 
	where $\mathbf{R}_{qk} \in \mathbb{C}^{M \times M}$ represents the spatial correlation matrix of the channel vector $\mathbf{h}_{q k}^{\mathfrak{B}_{b}^{r}}$, and $\beta_{qk} = \mathrm{tr} \left( \mathbf{R}_{qk} \right)/M$ is the large-scale fading coefficient.
	
	\subsubsection{Asynchronous Reception Channel Model}  

	
	Consider a user-centric CF mMIMO-OFDM downlink transmission scenario, where each user is served only by a subset of nearby APs. We denote by $\mathcal{Q}_{k}$ the set of APs that serve the $k$th user. Correspondingly, according to the set $\mathcal{Q}_{k}$  for $k=1, 2, \cdots, K$, the subset of users that are served by AP $q$ is denoted as $\mathcal{K}_{q}$.
	As the $Q$ APs are randomly distributed in the network, the signal propagation time durations from different APs to an arbitrary user are generally different. 
	Without loss of generality, we assume that the arrival time of the first received signal at user $k$ is $t_{\hat{q}_{k} k} = 0$ (AP $\hat{q}_{k}$ is the nearest AP to user $k$), and denote the quantized time offset from the $q$th AP to the $k$th user in the sampling interval as~\cite{8341954} 
	\begingroup\makeatletter\def\f@size{10}\check@mathfonts
		\begin{equation}
			\begin{aligned} \label{Quantized_propagation_delay}
				\delta_{q k} = \left \lfloor (d_{q k} - d_{\hat{q}_{k} k}) / (c \cdot T_{s}) \right \rfloor = \left \lfloor (t_{q k} - t_{\hat{q}_{k} k}) / T_{s} \right \rfloor, 
			\end{aligned} 
		\end{equation}
	\endgroup 
    where $\left \lfloor \cdot \right \rfloor$ is the floor function, $c$ is the speed of light, $d_{q k}$ ($t_{q k}$) and $d_{\hat{q} k}$ ($t_{\hat{q} k}$) are the propagation distance (time) from AP $q$ and AP $\hat{q}_{k}$ to user $k$, respectively.
	The received signal of user $k$ on subcarrier $n \in \left\{ n_{1}, n_{1}+1, \cdots, n_{1}+N_{\mathrm{sub}}-1 \right\}$ under asynchronous reception is given by
	\begingroup\makeatletter\def\f@size{10}\check@mathfonts
		\begin{equation}
			\begin{aligned} \label{System_model}
				y_{k, n} =& \sum\nolimits_{q \in \mathcal{Q}_{k}} (\chi_{q k,n} \mathbf{h}_{qk}^{\mathfrak{B}_{r}})^{H} \mathbf{w}_{q k} s_{k, n} \\
				&+ \sum\nolimits_{i \neq k} \sum\nolimits_{q \in \mathcal{Q}_{i}} (\chi_{q k,n} \mathbf{h}_{q k}^{\mathfrak{B}_{r}})^{H} \mathbf{w}_{q i} s_{i, n} + n_{k, n},
			\end{aligned} 
		\end{equation}
	\endgroup 
	where $\chi_{q k, n} = e^{-j 2 \pi n \delta_{q k} / N}$ represents the quantized phase shift on the $n$th subcarrier caused by the quantized time offset $\delta_{q k}$, $\mathbf{w}_{q k, n} \in \mathbb{C}^{M \times 1}$ is the precoding vector for user $k$, $s_{k, n}$ is the data signal for user $k$ with $\mathbb{E} \{ s_{k, n} s_{k, n}^{H} \} = 1$, $n_{k, n} \sim \mathcal{N}_{\mathbb{C}} \left( 0,\sigma^2 \right)$ is the additive white Gaussian noise at the $k$th user.
	\begin{remark} 

		Note that the quantized time offset in~\eqref{Quantized_propagation_delay} are all integers. According to~\eqref{Quantized_propagation_delay}, the quantized time offset from AP $\hat{q}_{k}$ to user $k$ is 0. Moreover, for any AP $q$, if the distance difference satisfies the condition $\delta_{q k} \cdot D \le  | d_{q k} - d_{\hat{q}_{k}} | < (\delta_{q k}+1) \cdot D$ ($D = T_{s} \cdot c$ is defined as the sampling distance), then the quantized time offset from AP $q$ to user $k$ is $\delta_{q k}$. As such, the maximum quantized time offset from AP in set $\mathcal{Q}_{k}$ to user $k$ is $\delta_{\check{q}_{k} k} = \left \lfloor (d_{\check{q}_{k} k} - d_{\hat{q}_{k} k}) / D \right \rfloor$, where $\check{q}_{k}$ is the farthest AP to user $k$.
		 
	\end{remark}

	\subsection{Uplink Channel Estimation}

    We assume that the system operates in the time-division duplex mode, where the CSI obtained from uplink channel estimation is used for downlink precoding. There are $\tau_p$ channel uses for uplink pilots, and $\tau_d = N_{\mathrm{sub}}N_{T}  - \tau_p $ channel uses for downlink data transmission. In the uplink training phase, a set of $\tau_p$-length mutually orthogonal frequency-multiplexing pilot sequences $\boldsymbol{\Phi} = \left\{ \boldsymbol{\phi}_{1}, \boldsymbol{\phi}_{2}, \cdots, \boldsymbol{\phi}_{\tau_{P}} \right\}$ with $\left\| \boldsymbol{\phi}_{k} \right\|^{2} = \tau_{p}$ and $\boldsymbol{\phi}_{i}^{H} \boldsymbol{\phi}_{k} = 0$ for $k \ne i$ is used to estimate the channel. Assume that the first OFDM symbol in CB $\mathfrak{B}_{r}$ is used to send the pilot sequence and $\tau_{p}=N_{\mathrm{sub}}$. The asynchronous received signal at AP $q$ can be expressed as
	\begingroup\makeatletter\def\f@size{10}\check@mathfonts
		\begin{equation} \label{channel_estimation}
			\mathbf{Y}_{q} = \sum\nolimits_{i=1}^{K} \sqrt{p_{i}} \mathbf{h}_{qi}^{\mathfrak{B}_{r}} \left( \mathbf{\Theta}_{q i}\boldsymbol{\phi}_{i} \right)^{T} +\mathbf{N}_{q}, 
		\end{equation}
	\endgroup
	where $\mathbf{\Theta}_{q i}=\mathrm{diag}(\chi_{q i, n_{1}}^{'}, \cdots, \chi_{q i, n_{1}+\tau_{p}-1}^{'}) \in \mathbb{C}^{\tau _p \times \tau _p}$ is the quantized phase shift diagonal matrix, $p_i$ is the transmit power of user $i$, $\mathbf{N}_{q} \in \mathbb{C}^{M \times \tau _p}$ is the receiver noise matrix with independent Gaussian entries following $\mathcal{N}_{\mathbb{C}}\left(0, \sigma^{2}\right)$~\cite{10319677}. Note that the quantized phase shift in the uplink and downlink are generally not equal because the quantized phase shift in the uplink is obtained based on the time instant when the signal arrives at the AP.
	By multiplying the received signal $\mathbf{Y}_{q}$ in~\eqref{channel_estimation} with the normalized pilot signal $\boldsymbol{\phi}_{k}/\sqrt{\tau_{p}}$, we get
    \begingroup\makeatletter\def\f@size{10}\check@mathfonts
		\begin{equation}	\label{estimation}
			\mathbf{y}_{q t_{k}} = \sum\nolimits_{i = 1}^{K} \sqrt{\frac{p_{i}}{\tau_{p}}} (\alpha_{iqk} \mathbf{h}_{qi}^{\mathfrak{B}_{r}} ) + \mathbf{n}_{q t_{k}},	 
		\end{equation} 
	\endgroup
	where $t_{k} \in \left\{ 1,2,\cdots ,\tau_p \right\}$ is the index of the pilot assigned to user $k$, $\alpha_{i q k} = \boldsymbol{\phi}_{i}^{T} \mathbf{\Theta}_{qi}^{T} \boldsymbol{\phi}_{k}^{*}$, $\mathbf{n}_{qt_{k}}=\frac{1}{\sqrt{\tau_{p}}} \mathbf{N}_{q} \boldsymbol{\phi}_{k}^{*} \sim \mathcal{N}_{\mathbb{C}}\left(\mathbf{0}_M, \sigma^{2} \mathbf{I}_{M}\right)$ is the resultant effective noise at the $q$-th AP. The linear minimum mean square error (LMMSE) estimate of $\mathbf{h}_{q k}^{\mathfrak{B}_{r}}$ as follows:
	\begingroup\makeatletter\def\f@size{10}\check@mathfonts
		\begin{equation} \label{estimation_result}
			\hat{\mathbf{h}}_{q k}^{\mathfrak{B}_{r}} = \sqrt{\frac{p_k}{\tau_{p}}} \left( \alpha_{kqk} \right)^{*} \mathbf{R}_{qk} \left(\mathbf{\Psi}_{qt_{k}} \right)^{-1} \mathbf{y}_{q t_{k}}, 
		\end{equation}
	\endgroup
	where $\mathbf{\Psi}_{q t_{k}} = \sum\nolimits_{i = 1}^{K} \frac{p_{i}}{\tau_{p}} \left|\alpha_{iqk}\right|^{2} + \sigma^{2} \mathbf{I}_{M}$ is the covariance matrix of the received signal $\mathbf{y}_{q t_{k}}$.
	The estimated channel and its corresponding estimate error are uncorrelated and distributed as
	\begingroup\makeatletter\def\f@size{10}\check@mathfonts
		\begin{equation} \label{estimation_error}
			\hat{\mathbf{h}}_{q k}^{\mathfrak{B}_{r}} \sim \mathcal{N}_{\mathbb{C}}\left(\mathbf{0}_{M},\mathbf{B}_{qk}\right), \tilde{\mathbf{h}}_{q k}^{\mathfrak{B}_{r}} \sim \mathcal{N}_{\mathbb{C}}\left(\mathbf{0}_{M},\mathbf{C}_{qk} \right), 
		\end{equation}
	\endgroup
	where $\mathbf{B}_{q k} = \frac{p_k}{\tau_{p}} \left| \alpha_{k q k} \right|^{2} \mathbf{R}_{q k} \left(\mathbf{\Psi}_{q k} \right)^{-1} \mathbf{R}_{q k}$ and $\mathbf{C}_{q k}=\mathbf{R}_{q k}-\mathbf{B}_{q k}$. Besides, the respective channel estimate of user $i$ in $\mathcal{P}_k\setminus\left\{ k \right\}$ ($\mathcal{P}_k$ is the subset of users that use the same pilot sequence with user $k$) is linearly dependent with $\hat{\mathbf{h}}_{q k}^{\mathfrak{B}_{b}^{r}}$. If $\mathbf{R}_{qk}$ is invertible, we have
	\begingroup\makeatletter\def\f@size{10}\check@mathfonts
		\begin{equation}
			\hat{\mathbf{h}}_{q i}^{\mathfrak{B}_{b}^{r}}=\mathbf{E}_{i q k} \hat{\mathbf{h}}_{q k}^{\mathfrak{B}_{b}^{r}}, 
		\end{equation}
	\endgroup
	where $\mathbf{E}_{i q k}=\sqrt{\frac{p_{i}}{p_{k}}} (\frac{\alpha_{i q k}}{\alpha_{k q k}})^{*}\mathbf{R}_{q i} \mathbf{R}_{q k}^{-1}\in \mathbb{C}^{M \times M}$. 

	\section{Mixed Coherent and Non-Coherent Transmissions Under Asynchronous Reception}
    
	In this section, we first obtain the achievable sum-rate for the mixed coherent and non-coherent transmissions under asynchronous reception and then characterize the sum-rate in closed form for the case with the MR precoding scheme. Subsequently, we propose an AP clustering algorithm to group APs into a set of coherent clusters.
    
	\subsection{Mixed Coherent and Non-Coherent Transmissions}

	In the CF mMIMO-OFDM system, the quantized phase shifts introduced by asynchronous reception degrade the communication performance~\cite{10032129, 10319677}. In this letter, we consider a mixed coherent and non-coherent transmission scheme in which the APs that serve users are grouped into a set of coherent clusters. Specifically, for any user $k$, the APs in set $\mathcal{Q}_{k}$ are divided into $L_{k}$ coherent clusters. Denote $\mathcal{Q}_{k}^{\ell_{k}}$ as the set of APs in coherent cluster $\ell_{k} \in \left\{ 1, 2, \cdots, L_{k} \right\}$, and have $\sum_{\ell_{k}=1}^{L_{k}} | \mathcal{Q}_{k}^{\ell_{k}} | = | \mathcal{Q}_{k} |$. In the coherent cluster $\ell_{k}$, the APs have the same quantized phase shift $\chi_{k, n}^{\ell_{k}}$ ($\chi_{q k, n} = \chi_{k, n}^{\ell_{k}}$ for $\forall q \in \mathcal{Q}_{k}^{\ell_{k}}$) and sent the same data symbol $s_{k, n}^{\ell_{k}}$ ($\mathbb{E}\{ s_{k, n}^{\ell_{k}} (s_{k, n}^{\ell_{k}})^{H} \} = 1$) to the user $k$. Hence, transmissions are performed coherently. On the other hand, the APs in different coherent clusters have different quantized phase shifts. 
	As such, the transmissions from different clusters are performed in a non-coherent fashion, and the data symbols are independent and different, i.e. $\mathbb{E} \{ s_{k, n}^{\ell_{k}} (s_{i, n}^{\ell_{i}})^{H} \} = 0$ for $\forall k \ne i$ or ${\ell_{k}} \ne {\ell_{i}}$.At this time, the transmitted signal from AP $q$ is
	\begingroup\makeatletter\def\f@size{10}\check@mathfonts
  		\begin{equation}  \label{signal}
    		\mathbf{x}_{q} = \sum\nolimits_{i \in \mathcal{K}_{q}} \mathbf{w}_{q i} s_{i, n}^{\ell_{i}},
  		\end{equation}
	\endgroup
	where $\mathbf{w}_{q i} = \sqrt{\rho_{q i}} \mathbf{\bar{w}}_{q i} / \sqrt{\mathbb{E}\{ \| \mathbf{\bar{w}}_{q i} \|^{2} \}}$ is the precoding vector.
    Herein, $\rho_{q i} \ge 0$ is the transmit power that AP $q$ assigns to user $i$, and $\mathbf{\bar{w}}_{q i} $ is an arbitrarily scaled vector of the precoding vector. 
	In the mixed coherent and non-coherent transmissions, the received signal at user $k$ is given by 
	\begingroup\makeatletter\def\f@size{10}\check@mathfonts
		\begin{equation} \label{receive_signal}
			\begin{aligned}
				&\hat{s}_{k, n}^{\mathrm{mixed}} = \sum\nolimits_{q = 1}^{Q} (\chi_{q k, n} \mathbf{h}_{q k}^{\mathfrak{B}_{r}})^{H} \mathbf{x}_{q} + n_{k, n}  \\
				&= \sum\nolimits_{i = 1}^{K} \sum\nolimits_{q \in \mathcal{Q}_{i}} (\chi_{q k, n} \mathbf{h}_{q k}^{\mathfrak{B}_{r}})^{H} \mathbf{w}_{q i} s_{i, n}^{\ell_{i}} + n_{k, n} \\
				&= \sum\nolimits_{q \in \mathcal{Q}_{k}^{\ell_{k}}} (\chi_{k, n}^{\ell_{k}} \mathbf{h}_{q k}^{\mathfrak{B}_{r}})^{H} \mathbf{w}_{q k} s_{k, n}^{\ell_{k}} \\
				& + \sum\nolimits_{\ell_{k}' \neq \ell_{k}} \sum\nolimits_{q' \in \mathcal{Q}_{k}^{\ell_{k}'}} (\chi_{k, n}^{\ell_{k}'} \mathbf{h}_{q' k}^{\mathfrak{B}_{r}})^{H} \mathbf{w}_{q' k} s_{k, n}^{\ell_{k}'} + n_{k, n}\\
				&  + \sum\nolimits_{i \neq k} \sum\nolimits_{\ell_{i} = 1}^{L_{i}} \sum\nolimits_{q'' \in \mathcal{Q}_{i}^{\ell_{i}}} (\chi_{q'' k, n} \mathbf{h}_{q'' k}^{\mathfrak{B}_{r}})^{H} \mathbf{w}_{q'' i} s_{i, n}^{\ell_{i}} ,
			\end{aligned}
		\end{equation}
	\endgroup
	where the first term is the desired signal from cluster $\ell_{k}$, the second and third terms are the inter-cluster and inter-user interference, respectively.
	\begin{figure*}[ht] 
		\centering
		\begingroup\makeatletter\def\f@size{10}\check@mathfonts
			\begin{equation} \label{SINR}
				\begin{aligned}
					\gamma_{k}^{\mathrm{mixed}} = \frac{\sum\nolimits_{\ell_{k}=1}^{L_{k}} \left| \sum_{q \in \mathcal{Q}_{k}^{\ell_{k}}} \mathbb{E}\left\{ (\mathbf{h}_{q k}^{\mathfrak{B}_{r}})^{H} \mathbf{w}_{q k} \right\} \right|^2}{\sum\nolimits_{i = 1}^{K} \sum\nolimits_{\ell_{i} = 1}^{L_{i}} \mathbb{E}\left\{ \left| \sum_{q \in \mathcal{Q}_{i}^{\ell_{i}}} (\chi_{q k, n} \mathbf{h}_{q k}^{\mathfrak{B}_{r}})^{H} \mathbf{w}_{q i} \right|^2\right\} - \sum\nolimits_{\ell_{k}=1}^{L_{k}} \left| \sum_{q \in \mathcal{Q}_{k}^{\ell_{k}}} \mathbb{E}\left\{ (\mathbf{h}_{q k}^{\mathfrak{B}_{r}})^{H} \mathbf{w}_{q k} \right\} \right|^2 + \sigma_{\mathrm{dl}}^{2}} .
				\end{aligned}
			\end{equation}
		\endgroup
	\end{figure*}
	\begin{figure*}[ht] 
		\centering
		\begingroup\makeatletter\def\f@size{10}\check@mathfonts
			\begin{equation} \label{closed_form}
				\begin{aligned}
				\gamma_{k}^{\mathrm{mixed}} = \frac{\sum\nolimits_{\ell_{k}=1}^{L_{k}} \left| \sum_{q \in \mathcal{Q}_{k}^{\ell_{k}}} \sqrt{\rho_{qk} \mathrm{tr}(\mathbf{B}_{qk})} \right|^2}{\sum\nolimits_{i = 1}^{K} \sum\nolimits_{\ell_{i} = 1}^{L_{i}} \sum_{q \in \mathcal{Q}_{i}^{\ell_{i}}} \rho_{qi} \frac{\mathrm{tr}(\mathbf{R}_{q k} \mathbf{B}_{q i})}{\mathrm{tr}(\mathbf{B}_{q i})} + \sum\nolimits_{i \in \mathcal{P}_{k} \backslash \{k\}} \sum\nolimits_{\ell_{i} = 1}^{L_{i}} \left| \sum_{q \in \mathcal{Q}_{i}^{\ell_{i}}} \chi_{q k, n}^{\mathcal{Q}_{\ell_{k}}} \sqrt{\frac{\rho_{qi}}{\mathrm{tr}(\mathbf{B}_{q i})}} \mathrm{tr}\left( \mathbf{E}_{iqk} \mathbf{B}_{qk} \right) \right|^{2} + \sigma_{\mathrm{dl}}^{2}} .
				\end{aligned}
			\end{equation}
		\endgroup
		\hrulefill
	\end{figure*}
	
	\begin{proposition} \label{Proposition1}
		Based on the received signal in~\eqref{receive_signal}, when user $k$ uses successive interference cancellation (SIC) to detect the signals sent by $L_{k}$ coherent clusters (with the perfect SIC and arbitrary decoding order), the achievable downlink sum-rate under asynchronous reception is given by
		\begingroup\makeatletter\def\f@size{10}\check@mathfonts
			\begin{equation} \label{sum_SE}
				\mathrm{SE}_{n}^{\mathrm{mixed}} = \sum\nolimits_{k=1}^{K} \frac{\tau_{d}}{\tau} \frac{N}{N_{\mathrm{OFDM}}}\log _{2}\left( 1 + \gamma_{k}^{\mathrm{mixed}} \right), 
			\end{equation}
		\endgroup
		where the effective SINR is given in~\eqref{SINR}, as shown on top of this page.
	\end{proposition}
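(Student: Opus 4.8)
The plan is to combine the use-and-then-forget (UatF) achievability bound with the telescoping property of successive interference cancellation (SIC). Viewing user $k$ as the receiver of the $L_k$ parallel data streams $s_{k,n}^{1},\dots,s_{k,n}^{L_k}$ carried in \eqref{receive_signal}, I first apply UatF to each stream. Let $c_{\ell_k}=\sum_{q\in\mathcal{Q}_k^{\ell_k}}(\chi_{k,n}^{\ell_k}\mathbf{h}_{qk}^{\mathfrak{B}_r})^H\mathbf{w}_{qk}$ be the effective channel carrying $s_{k,n}^{\ell_k}$, and decompose it into its mean $\mathbb{E}\{c_{\ell_k}\}$ plus a zero-mean fluctuation. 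Because the common cluster phase $\chi_{k,n}^{\ell_k}$ is deterministic and of unit modulus, the UatF useful-signal power is $A_{\ell_k}:=|\mathbb{E}\{c_{\ell_k}\}|^2=\big|\sum_{q\in\mathcal{Q}_k^{\ell_k}}\mathbb{E}\{(\mathbf{h}_{qk}^{\mathfrak{B}_r})^H\mathbf{w}_{qk}\}\big|^2$, which is exactly the $\ell_k$-th summand in the numerator of \eqref{SINR}. The fluctuation, together with the inter-cluster and inter-user interference and the noise, is uncorrelated with $s_{k,n}^{\ell_k}$, so treating it as independent Gaussian noise lower-bounds the achievable rate of stream $\ell_k$ by $\log_2(1+\gamma_{\ell_k}^{\mathrm{SIC}})$.

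Next I would carry out the SIC telescoping. Decoding the streams in any fixed order and applying perfect SIC, stream $\ell_k$ sees only the clusters decoded after it (written $\ell'\succ\ell_k$) as residual intra-user interference, so
\[
\gamma_{\ell_k}^{\mathrm{SIC}} = \frac{A_{\ell_k}}{\sum_{\ell'\succ\ell_k}A_{\ell'}+N_{\mathrm{eff}}},
\]
where $N_{\mathrm{eff}}$ collects every contribution that is not one of user $k$'s coherent gains and is therefore common to all streams. Summing the per-stream rates and cancelling the intermediate factors gives
\[
\sum_{\ell_k=1}^{L_k}\log_2\!\left(1+\gamma_{\ell_k}^{\mathrm{SIC}}\right) = \log_2\!\prod_{\ell_k=1}^{L_k}\frac{\sum_{\ell'\succeq\ell_k}A_{\ell'}+N_{\mathrm{eff}}}{\sum_{\ell'\succ\ell_k}A_{\ell'}+N_{\mathrm{eff}}} = \log_2\!\Big(1+\frac{\sum_{\ell_k}A_{\ell_k}}{N_{\mathrm{eff}}}\Big) = \log_2\!\left(1+\gamma_k^{\mathrm{mixed}}\right),
\]
so the telescoping makes the result manifestly independent of the decoding order, as the statement requires.

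It then remains to identify the common effective-noise power $N_{\mathrm{eff}}$ with the denominator of \eqref{SINR}. I would evaluate the second moment $\mathbb{E}\{|\hat{s}_{k,n}^{\mathrm{mixed}}|^2\}$ from \eqref{receive_signal}: since the data symbols are independent across pairs $(i,\ell_i)$ and independent of the noise $n_{k,n}$, the total received power separates into $\sum_{i=1}^{K}\sum_{\ell_i=1}^{L_i}\mathbb{E}\{|\sum_{q\in\mathcal{Q}_i^{\ell_i}}(\chi_{qk,n}\mathbf{h}_{qk}^{\mathfrak{B}_r})^H\mathbf{w}_{qi}|^2\}$ plus the noise power $\sigma_{\mathrm{dl}}^2$. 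The UatF receiver counts $\sum_{\ell_k}A_{\ell_k}$ of this as the coherent useful part, and subtracting it leaves precisely the denominator of \eqref{SINR}, so $N_{\mathrm{eff}}$ matches. Multiplying $\log_2(1+\gamma_k^{\mathrm{mixed}})$ by the pre-log factor $\frac{\tau_d}{\tau}\frac{N}{N_{\mathrm{OFDM}}}$—the downlink-data fraction times the cyclic-prefix efficiency $N/(N+N_{\mathrm{CP}})$—and summing over $k$ yields \eqref{sum_SE}.

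The main obstacle I anticipate is the bookkeeping that makes $N_{\mathrm{eff}}$ genuinely common to all streams and independent of the decoding order. Concretely, one must verify that after removing the coherent components $\sum_{\ell_k}A_{\ell_k}$ the $i=k$ term of the power sum contributes only the incoherent fluctuation of user $k$'s own clusters, and that the phase factors $\chi_{qk,n}$ appearing in the $i\neq k$ and pilot-contamination terms (through $\alpha_{iqk}$ and $\mathbf{E}_{iqk}$) never correlate with the useful symbol $s_{k,n}^{\ell_k}$. These uncorrelatedness checks, rather than the telescoping algebra, are where the real work lies; the subsequent closed-form specialization \eqref{closed_form} would then follow by inserting the MR choice $\mathbf{\bar{w}}_{qi}=\hat{\mathbf{h}}_{qi}^{\mathfrak{B}_r}$ together with the moments from \eqref{estimation_error}.
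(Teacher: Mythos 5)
Your proposal is correct and follows essentially the same route as the paper's own proof: a use-and-forget (hardening) bound per coherent-cluster stream, perfect SIC in which only the deterministic mean component of each decoded stream is removed (so the fluctuations of all clusters stay in the common effective noise $N_{\mathrm{eff}}$, exactly as in the paper's uncorrelated-noise power computation), and then the telescoping product $\prod_{\ell_k}(1+\gamma_{\ell_k,k}^{\mathrm{mixed}})=1+\gamma_k^{\mathrm{mixed}}$ yielding order-independence. Your identification of $N_{\mathrm{eff}}$ with the denominator of \eqref{SINR} via the total second moment matches the paper's per-stream SINR in \eqref{SINR_part}, so no gap remains.
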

	
	\begin{proof}
		Please refer to Appendix A.
	\end{proof}
    
	\begin{algorithm}[t]
		\caption{AP clustering scheme based on distance criteria}
		\label{Algorithm:AP_Clustering}
		\begin{algorithmic}[1] 
		\REQUIRE $M_{0}$, $D$, $\{d_{q k}\}$, $k = 1, \cdots, K, q = 1, \cdots, Q $; 
		\STATE User $k^{*}$ chooses a group of $M_{0}$ serving APs that correspond to the $M_{0}$ nearest APs to user $k^{*}$, and get the set $\mathcal{Q}_{k^{*}}$;
		\STATE Finding the AP closest to and farthest from user $k^{*}$ in set $\mathcal{Q}_{k^{*}}$, denoted the indexes as $\hat{q}_{k^{*}}$ and $\check{q}_{k^{*}}$, and then calculating $L_{k^{*}}$ according to~\eqref{Num_cluster};
		\FOR{$q \in \mathcal{Q}_{k^{*}}$}
		  \FOR{$\ell_{k^{*}}=1$ \textbf{to} $L_{k^{*}}$}
		    \STATE $d^{\ell_{k^{*}}-1} = (\ell_{k^{*}}-1) \cdot D$, $ d^{\ell_{k^{*}}} = \ell_{k^{*}} \cdot D$
			\IF{$d^{\ell_{k^{*}}-1} \le \lfloor d_{q, k^{*}} - d_{\hat{q}_{k^{*}}, k^{*}} \rfloor < d^{\ell_{k^{*}}}$}
			  \STATE $\mathcal{Q}_{k^{*}}^{\ell_{k^{*}}} \Leftarrow \mathcal{Q}_{k^{*}}^{\ell_{k^{*}}} \cup \{ q \}$;
			\ENDIF
		  \ENDFOR
		\ENDFOR
		\ENSURE $\{ \mathcal{Q}_{k^{*}}^{\ell_{k^{*}}} \}, \ell_{k^{*}}=1, \cdots, L_{k^{*}}$.
		\end{algorithmic}
	\end{algorithm}

	From the effective SINR in~\eqref{SINR}, it is observed that the desired signal in the numerator contains the sum of the squared contributions from different coherent clusters, which is different from the synchronous reception case where the summation is inside the square and the asynchronous case where the summation contains the different quantized phase shifts.
	By clustering, APs with the same quantized phase shift are grouped into the same coherent cluster, so the impact of the quantized phase shift caused by asynchronous reception on the desired signal is eliminated. Furthermore, the multi-user interference in the denominator of~\eqref{SINR} also changes accordingly, but the effect of the quantized phase shifts still exists. Therefore, the mixed coherent and non-coherent transmissions can effectively reduce the impact of asynchronous reception on CF mMIMO-OFDM systems.

	\begin{corollary} \label{Corollary 1}
		For the mixed coherent and non-coherent transmissions under asynchronous reception, if $\mathbf{\bar{w}}_{q k} = \mathbf{\hat{h}}_{q k}^{\mathfrak{B}_{r}}$ (i.e., MR precoding) is used, the SINR given in~\eqref{SINR} can be explicitly rewritten as in~\eqref{closed_form}, as shown on top of previous page.
	\end{corollary}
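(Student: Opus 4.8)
The plan is to substitute the MR precoder $\bar{\mathbf{w}}_{qi}=\hat{\mathbf{h}}_{qi}^{\mathfrak{B}_{r}}$ into the generic SINR \eqref{SINR} and evaluate every expectation in closed form, exploiting three facts: the LMMSE orthogonality between estimate and error implicit in \eqref{estimation_error}, the mutual independence of channels associated with distinct APs, and the circular Gaussianity of the estimates. First I would fix the normalization: since $\hat{\mathbf{h}}_{qi}^{\mathfrak{B}_{r}}\sim\mathcal{N}_{\mathbb{C}}(\mathbf{0}_M,\mathbf{B}_{qi})$ gives $\mathbb{E}\{\|\bar{\mathbf{w}}_{qi}\|^2\}=\mathrm{tr}(\mathbf{B}_{qi})$, the precoder is $\mathbf{w}_{qi}=\sqrt{\rho_{qi}/\mathrm{tr}(\mathbf{B}_{qi})}\,\hat{\mathbf{h}}_{qi}^{\mathfrak{B}_{r}}$. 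The numerator of \eqref{SINR} then follows at once: writing $\mathbf{h}_{qk}^{\mathfrak{B}_{r}}=\hat{\mathbf{h}}_{qk}^{\mathfrak{B}_{r}}+\tilde{\mathbf{h}}_{qk}^{\mathfrak{B}_{r}}$ and using that the error is uncorrelated with the estimate yields $\mathbb{E}\{(\mathbf{h}_{qk}^{\mathfrak{B}_{r}})^{H}\mathbf{w}_{qk}\}=\sqrt{\rho_{qk}/\mathrm{tr}(\mathbf{B}_{qk})}\,\mathrm{tr}(\mathbf{B}_{qk})=\sqrt{\rho_{qk}\,\mathrm{tr}(\mathbf{B}_{qk})}$, which reproduces the numerator of \eqref{closed_form}.

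For the denominator I would define $m_{qi}=\mathbb{E}\{(\mathbf{h}_{qk}^{\mathfrak{B}_{r}})^{H}\mathbf{w}_{qi}\}$ and decompose each signal-power term $(i,\ell_i)$ in \eqref{SINR} using AP independence, so that cross-AP products factor into products of means and $\mathbb{E}\{|\sum_{q\in\mathcal{Q}_{i}^{\ell_i}}\chi_{qk,n}^{*}(\mathbf{h}_{qk}^{\mathfrak{B}_{r}})^{H}\mathbf{w}_{qi}|^2\}=\sum_{q\in\mathcal{Q}_{i}^{\ell_i}}\mathrm{Var}\big((\mathbf{h}_{qk}^{\mathfrak{B}_{r}})^{H}\mathbf{w}_{qi}\big)+|\sum_{q\in\mathcal{Q}_{i}^{\ell_i}}\chi_{qk,n}^{*}m_{qi}|^2$. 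The coherent (mean) part is then classified by pilot reuse. For $i=k$ the common cluster phase has unit modulus, so $\sum_{\ell_k}|\sum_{q\in\mathcal{Q}_{k}^{\ell_k}}\chi_{qk,n}^{*}m_{qk}|^2=\sum_{\ell_k}|\sum_{q\in\mathcal{Q}_{k}^{\ell_k}}m_{qk}|^2$ coincides with, and hence cancels, the subtracted term in the denominator of \eqref{SINR}. For $i\neq k$ with $i\notin\mathcal{P}_k$ the independence of $\hat{\mathbf{h}}_{qi}^{\mathfrak{B}_{r}}$ from $\mathbf{h}_{qk}^{\mathfrak{B}_{r}}$ forces $m_{qi}=0$. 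For $i\in\mathcal{P}_k\setminus\{k\}$ the relation $\hat{\mathbf{h}}_{qi}^{\mathfrak{B}_{r}}=\mathbf{E}_{iqk}\hat{\mathbf{h}}_{qk}^{\mathfrak{B}_{r}}$ gives $m_{qi}=\sqrt{\rho_{qi}/\mathrm{tr}(\mathbf{B}_{qi})}\,\mathrm{tr}(\mathbf{E}_{iqk}\mathbf{B}_{qk})$, which assembles precisely into the pilot-contamination term of \eqref{closed_form}.

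It remains to evaluate the per-AP variances. With $|\chi_{qk,n}|=1$, each $\mathbb{E}\{|(\mathbf{h}_{qk}^{\mathfrak{B}_{r}})^{H}\hat{\mathbf{h}}_{qi}^{\mathfrak{B}_{r}}|^2\}$ is obtained by splitting $\mathbf{h}_{qk}^{\mathfrak{B}_{r}}$ into estimate and error and invoking the complex-Gaussian quartic identity $\mathbb{E}\{|\mathbf{x}^{H}\mathbf{M}\mathbf{x}|^2\}=|\mathrm{tr}(\mathbf{M}\mathbf{B})|^2+\mathrm{tr}(\mathbf{M}\mathbf{B}\mathbf{M}^{H}\mathbf{B})$ for $\mathbf{x}\sim\mathcal{N}_{\mathbb{C}}(\mathbf{0},\mathbf{B})$ (with $\mathbf{M}=\mathbf{I}$ in the $i=k$ case and $\mathbf{M}=\mathbf{E}_{iqk}$ in the pilot-sharing case); the remaining $i\notin\mathcal{P}_k$ case needs only second moments and independence. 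In every case the variance must collapse to $\rho_{qi}\,\mathrm{tr}(\mathbf{R}_{qk}\mathbf{B}_{qi})/\mathrm{tr}(\mathbf{B}_{qi})$, matching the first denominator sum of \eqref{closed_form}. I expect the crux to be the pilot-sharing case, where after the quartic expansion one must verify the algebraic identity $\mathbf{E}_{iqk}\mathbf{B}_{qk}\mathbf{E}_{iqk}^{H}=\mathbf{B}_{qi}$: substituting $\mathbf{E}_{iqk}=\sqrt{p_i/p_k}(\alpha_{iqk}/\alpha_{kqk})^{*}\mathbf{R}_{qi}\mathbf{R}_{qk}^{-1}$ and $\mathbf{B}_{qk}=\tfrac{p_k}{\tau_p}|\alpha_{kqk}|^2\mathbf{R}_{qk}\mathbf{\Psi}_{qk}^{-1}\mathbf{R}_{qk}$, the $\mathbf{R}_{qk}^{-1}$ factors cancel, and because pilot reuse makes $\alpha_{iqi}=\alpha_{iqk}$ and $\mathbf{\Psi}_{qt_i}=\mathbf{\Psi}_{qk}$, the scalar prefactors combine to reproduce $\mathbf{B}_{qi}$ exactly. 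Substituting the numerator, the collapsed variances, and the pilot-contamination means into \eqref{SINR} and cancelling the $i=k$ coherent term then yields \eqref{closed_form}.
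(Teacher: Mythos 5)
Your derivation is correct, and the comparison here is one-sided: the paper offers no actual proof of this corollary, deferring entirely to the computation in its reference, so your argument supplies precisely the details the paper omits. The route you take is the standard one for such closed-form MR results, and every step checks out: the normalization $\mathbb{E}\{\|\hat{\mathbf{h}}_{qi}^{\mathfrak{B}_{r}}\|^2\}=\mathrm{tr}(\mathbf{B}_{qi})$, the numerator from LMMSE orthogonality, the expansion of each denominator term of~\eqref{SINR} into per-AP variances plus a coherent mean part (valid because channels and estimates at distinct APs are independent and $|\chi_{qk,n}|=1$), the observation that the $i=k$ mean part carries only the common unit-modulus cluster phase and therefore exactly cancels the subtracted term, and the complex-Gaussian quartic identity for the remaining moments. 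You also correctly identify and verify the crux: $\mathbf{E}_{iqk}\mathbf{B}_{qk}\mathbf{E}_{iqk}^{H}=\mathbf{B}_{qi}$, which indeed follows because the $\mathbf{R}_{qk}^{-1}$ factors cancel and pilot reuse gives $\alpha_{iqi}=\alpha_{iqk}$ and $\mathbf{\Psi}_{qt_i}=\mathbf{\Psi}_{qt_k}$; this is what collapses the pilot-sharing variance to $\rho_{qi}\,\mathrm{tr}(\mathbf{R}_{qk}\mathbf{B}_{qi})/\mathrm{tr}(\mathbf{B}_{qi})$ so that all users enter the first denominator sum uniformly. The one step you assert rather than justify is the independence of $\hat{\mathbf{h}}_{qi}^{\mathfrak{B}_{r}}$ from $\mathbf{h}_{qk}^{\mathfrak{B}_{r}}$ for $i\notin\mathcal{P}_k$: under asynchronous reception this is not automatic, since the phase-shift matrix $\mathbf{\Theta}_{qi}$ in~\eqref{channel_estimation} could in principle destroy pilot orthogonality and make $\alpha_{iqk}\neq 0$ even for non-co-pilot users; it holds here because the pilots are frequency-multiplexed with disjoint subcarrier support, so $\alpha_{iqk}=\boldsymbol{\phi}_i^{T}\mathbf{\Theta}_{qi}^{T}\boldsymbol{\phi}_k^{*}=0$ for \emph{any} diagonal $\mathbf{\Theta}_{qi}$ whenever $i\notin\mathcal{P}_k$, and this is worth one explicit sentence in your write-up. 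With that remark added, your proof reproduces~\eqref{closed_form} term by term (up to the paper's own notational slip of writing the phase rather than its conjugate inside the pilot-contamination modulus, which your derivation implicitly corrects).
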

    \begin{proof}
		This proof follows from~\cite{10319677}, and the details are omitted for brevity.
	\end{proof}


	\subsection{AP Clustering Algorithms}

	In this subsection, we propose an efficient AP clustering algorithm based on distance criteria. Firstly, the users are connected to a fixed number of APs, which aims to ensure service quality~\cite{10279001}. 
	Then, the connected APs are further grouped, where the APs with the same quantized phase shift are divided into the same coherent cluster. 
	 
	Specifically, when a new user $k^{*}$ enters the network, it chooses a group of $M_{0}$ serving APs that correspond to the $M_{0}$ nearest APs to user $k^{*}$. 
	The reason is that selecting APs based on distance can reduce the number of coherent clusters and improve the performance of mixed transmissions~\cite{10279001}. The selected AP set is $\mathcal{Q}_{k^{*}}$. 
	Then, finding the AP closest to and farthest from user $k^{*}$ in set $\mathcal{Q}_{k^{*}}$, denoted the indexes as $\hat{q}_{k^{*}}$ and $\check{q}_{k^{*}}$ respectively.
    Subsequently, the number of coherent clusters of user $k^{*}$ is determined, which is given by 
	\begingroup\makeatletter\def\f@size{10}\check@mathfonts
		\begin{equation} \label{Num_cluster}
			L_{k^{*}} = \left \lfloor (d_{\check{q}_{k^{*}} k^{*}} - d_{\hat{q}_{k^{*}} k^{*}}) / D \right \rfloor + 1 = \delta_{\check{q}_{k^{*}} k^{*}} + 1. 
		\end{equation}
	\endgroup
	According to the number $L_{k^{*}}$ of coherent clusters of user $k^{*}$ and the sampling distance $D$, setting a set of reference distances $\{d^{\ell_{k^{*}}} = \ell_{k^{*}} \cdot D: \ell_{k^{*}} = 0, 1, \cdots,L_{k^{*}}\}$.
	If AP $q$ in set $\mathcal{Q}_{k^{*}}$ satisfying $d^{\ell_{k^{*}}-1} \le | d_{q, k^{*}} - d_{\hat{q}_{k^{*}}, k^{*}} | < d^{\ell_{k^{*}}}$, then AP $q$ is grouped into the coherent cluster $\ell_{k^{*}}$, i.e., $q \in \mathcal{Q}_{k^{*}}^{\ell_{k^{*}}}$. The AP clustering algorithm is summarized in \textbf{Algorithm~\ref{Algorithm:AP_Clustering}}. 
	

	\begin{figure}
		\centering
		\includegraphics[width=0.8\linewidth]{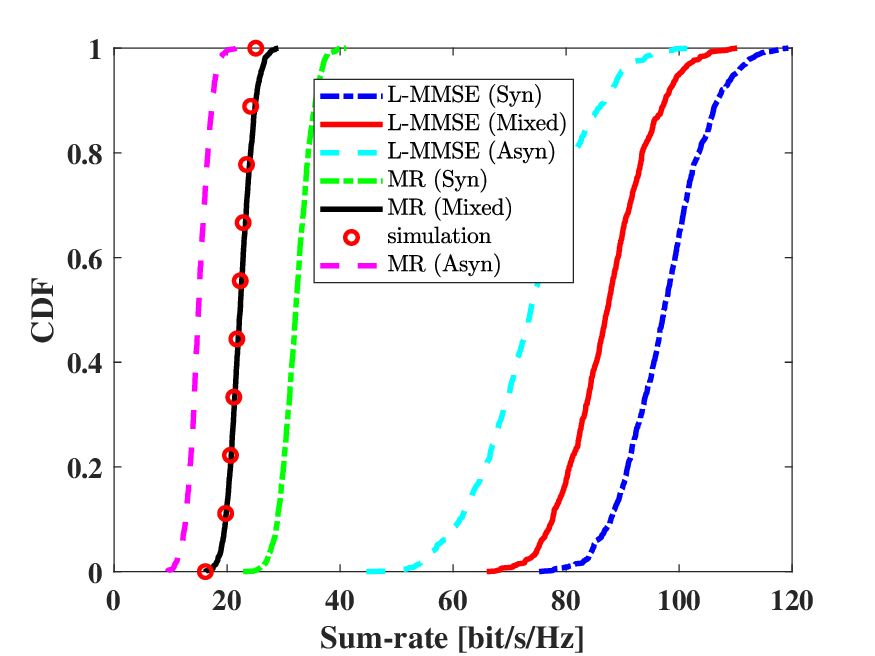}
		\caption{CDF of sum-rate comparing the synchronous reception, asynchronous reception, and mixed transmissions. }
		\label{Sum_rate}
	\end{figure}
	
	\section{Numerical Results}
    
	In this section, numerical results are provided to verify our theoretical analysis. Consider a CF mMIMO-OFDM system setup where $Q = 30$ APs and $K = 20$ users are independently and uniformly distributed in a $1 \times 1$ km square area and a wrap-around topology is used. MR and local minimum mean square error (L-MMSE) precoding are used at downlink in all simulations. 
	The channel model and the OFDM parameters are the same as in~\cite{10319677}. 
	The coherence time and bandwidth are set as $0.5 \: \text{ms}$ and $180 \: \text{kHz}$, respectively, which fits a CB setup of $N_{\mathrm{sub}} = 14$ subcarriers and $N_{\mathrm{T}} = 7$ OFDM symbols. The uplink transmit power is $100 \: \text{mW}$ and the total transmit power of each AP is $200 \: \text{mW}$. 
	For the convenience of comparison with the clustering algorithm in~\cite{10279001}, we use the same power allocation scheme as in~\cite{10279001}, namely equal power allocation.
	Each CB contains 98 channel uses, in which 84 channel uses is used for downlink data transmission~\cite{10319677}.

	\begin{figure}
		\centering
		\includegraphics[width=0.8\linewidth]{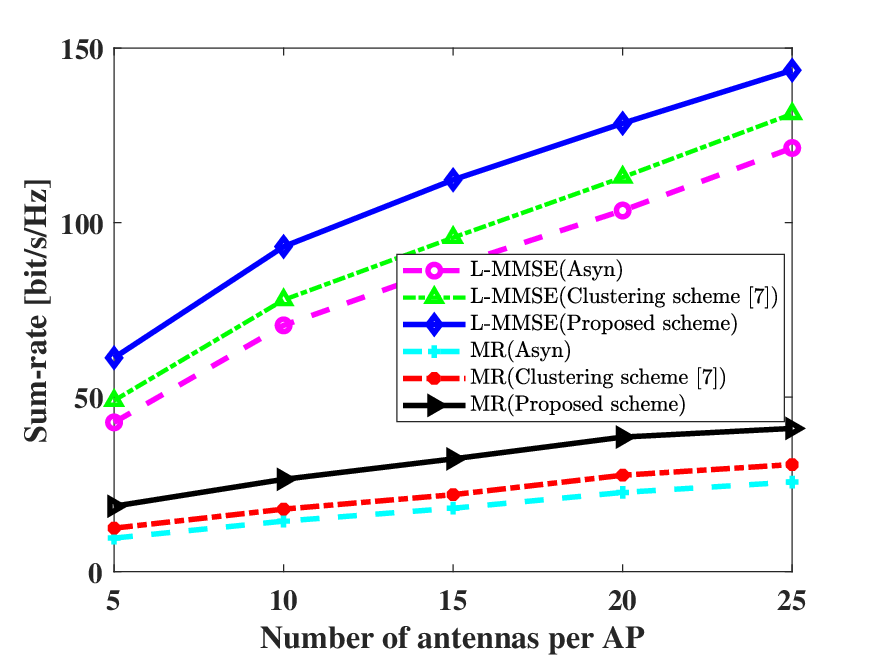}
		\caption{Sum-rate of users versus the number of antennas per AP.}
		\label{AP_change}
	\end{figure}

	Fig.~\ref{Sum_rate} illustrates the cumulative distribution function (CDF) of sum-rate under synchronous reception, asynchronous reception, and mixed coherent and non-coherent transmission scenarios using L-MMSE and MR precoding under the setup $\{ M, M_{0} \} = \{ 10, 20 \}$. As shown in Fig.~\ref{Sum_rate}, for both the cases with the L-MMSE or MR precoding, the proposed mixed coherent and non-coherent transmission method always achieves a higher sum-rate than that under asynchronous reception, which is lower than that under synchronous reception. Compared with asynchronous reception, the mixed transmissions improve system sum-rate by coherently transmitting signals within the coherent cluster. However, due to the non-coherent transmission between coherent clusters, this performance improvement is not enough to achieve the ideal synchronous reception performance, which can be inferred from \textbf{Proposition~\ref{Proposition1}}. Although there is still a certain gap between the sum-rate of mixed transmissions and synchronous reception, it greatly alleviates the performance loss caused by asynchronous reception to CF mMIMO-OFDM systems.


	In Fig.~\ref{AP_change}, we compare the performance of the proposed clustering algorithm with the Fixed algorithm in~\cite{10279001}. For a fair comparison, the number of selected APs is $M_{0} = 20$ and the number of CPUs for the Fixed algorithm is 1. Note that the Fixed algorithm proposed in~\cite{10279001} only selects $M_{0}$ largest large-scale fading coefficients of APs to serve users, which cannot group the APs with the same quantized into the same coherent clusters. Therefore, the Fixed algorithm is the user-centric approach and cannot adopt the mixed transmissions scheme.
	From Fig.~\ref{AP_change}, it can be seen that the Fixed algorithm (the user-centric approach) has a limited positive effect on asynchronous CF mMIMO-OFDM systems because there are still different quantized phase shifts between APs~\cite{10032129}.  In contrast, the proposed clustering algorithm can group APs with the same quantized phase shift into the same coherent cluster. After applying the mixed transmissions, the quantization phase shift effect on the desired signal can be eliminated. Therefore, the performance of asynchronous CF mMIMO-OFDM systems with mixed transmissions can be significantly improved.


	\section{Conclutions}
	
	In this letter, we analyzed the performance gain of the mixed coherent and non-coherent transmission approach in enhancing the sum-rate of CF mMIMO-OFDM systems under asynchronous reception. First, we derived a mixed coherent and non-coherent transmission achievable downlink sum-rate expression under asynchronous reception, and then obtained a closed-form expression when using the MR precoding. Subsequently, we proposed an effective AP clustering algorithm that can effectively group APs into a set of coherent clusters to use the mixed coherent and non-coherent transmissions. 
	The results showed that the mixed coherent and non-coherent transmission can effectively improve the sum-rate of CF mMIMO-OFDM systems under asynchronous reception, while still suffering mile performance loss compared to synchronous reception. Numerical results verified the effectiveness of the proposed method.
	

	\appendices

	\section{Proff of Proposition 1}

	Under the assumption of perfect SIC, at the beginning of data detection, user $k$ is unaware of any transmission signals. Without loss of generality, data detection starts from the coherent cluster 1. As such, the received signal in~\eqref{receive_signal} can be written as
	\begingroup\makeatletter\def\f@size{9}\check@mathfonts
		\begin{equation}  \label{first_decode}
			\begin{aligned}
				&\hat{s}_{k, n}^{\mathrm{mixed}, 1} = (\chi_{k, n}^{1})^{*} \sum\nolimits_{q \in \mathcal{Q}_{k}^{1}} 
				\mathbb{E} \left\{ (\mathbf{h}_{q k}^{\mathfrak{B}_{r}})^{H} \mathbf{w}_{q k} \right\} s_{k, n}^{1} \\
				&+ (\chi_{k, n}^{1})^{*} \sum\nolimits_{q \in \mathcal{Q}_{k}^{1}} \left((\mathbf{h}_{q k}^{\mathfrak{B}_{r}})^{H} \mathbf{w}_{q k} - \mathbb{E} \left\{ (\mathbf{h}_{q k}^{\mathfrak{B}_{r}})^{H} \mathbf{w}_{q k} \right\} \right) s_{k, n}^{1} \\
				&+ \sum\nolimits_{\ell_{k}' = 2} (\chi_{k, n}^{\ell_{k}'})^{*} \sum\nolimits_{q' \in \mathcal{Q}_{k}^{\ell_{k}'}} ( \mathbf{h}_{q' k}^{\mathfrak{B}_{r}})^{H} \mathbf{w}_{q' k} s_{k, n}^{\ell_{k}'} \\
				&+ \sum\nolimits_{i \neq k} \sum\nolimits_{\ell_{i} = 1}^{L_{i}} \sum\nolimits_{q'' \in \mathcal{Q}_{i}^{\ell_{i}}} (\chi_{q'' k, n} \mathbf{h}_{q'' k}^{\mathfrak{B}_{r}})^{H} \mathbf{w}_{q'' i} s_{i, n}^{\ell_{i}} + n_{k, n},
			\end{aligned}
		\end{equation}
	\endgroup
	where the first term is the desired signal over known deterministic channel while other terms are treated as uncorrelated noise~\cite{8809413}. Sequentially, user $k$ detects signal from coherent cluster $\ell_{k}$ by subtracting the first $\ell_{k}-1$ signals:
	\begingroup\makeatletter\def\f@size{9}\check@mathfonts
		\begin{equation}   \label{last_decode}
			\begin{aligned} 
				&\hat{s}_{k, n}^{\mathrm{mixed}, \ell_{k}} = \hat{s}_{k, n}^{\mathrm{mixed}} - \sum\nolimits_{t=1}^{\ell_{k}-1} (\chi_{k, n}^{t})^{*} \sum_{q \in \mathcal{Q}_{k}^{t}} \mathbb{E} \left\{ (\mathbf{h}_{q k}^{\mathfrak{B}_{r}})^{H} \mathbf{w}_{q k} \right\} s_{k, n}^{t} \\
				&= (\chi_{k, n}^{\ell_{k}})^{*} \sum\nolimits_{q \in \mathcal{Q}_{k}^{\ell_{k}}} \mathbb{E} \left\{ (\mathbf{h}_{q k}^{\mathfrak{B}_{r}})^{H} \mathbf{w}_{q k} \right\} s_{k, n}^{\ell_{k}} \\
				&+ (\chi_{k, n}^{\ell_{k}})^{*} \sum\nolimits_{q \in \mathcal{Q}_{k}^{\ell_{k}}} \left( (\mathbf{h}_{q k}^{\mathfrak{B}_{r}})^{H} \mathbf{w}_{q k} - \mathbb{E} \left\{ (\mathbf{h}_{q k}^{\mathfrak{B}_{r}})^{H} \mathbf{w}_{q k} \right\} \right) s_{k, n}^{\ell_{k}} \\
				&+ \sum_{t=1}^{\ell_{k}-1} (\chi_{k, n}^{t})^{*} \sum\nolimits_{q \in \mathcal{Q}_{k}^{t}} \left( (\mathbf{h}_{q k}^{\mathfrak{B}_{r}})^{H} \mathbf{w}_{q k} - \mathbb{E} \left\{ (\mathbf{h}_{q k}^{\mathfrak{B}_{r}})^{H} \mathbf{w}_{q k} \right\} \right) s_{k, n}^{t} \\
				&+ \sum\nolimits_{\ell_{k}' = \ell_{k}+1}^{L_{k}} (\chi_{k, n}^{\ell_{k}'})^{*} \sum\nolimits_{q' \in \mathcal{Q}_{k}^{\ell_{k}'}} (\mathbf{h}_{q' k}^{\mathfrak{B}_{r}})^{H} \mathbf{w}_{q' k} s_{k, n}^{\ell_{k}'} \\
				&+ \sum\nolimits_{i \neq k} \sum\nolimits_{\ell_{i} = 1}^{L_{i}} \sum\nolimits_{q'' \in \mathcal{Q}_{i}^{\ell_{i}}} (\chi_{q'' k, n} \mathbf{h}_{q'' k}^{\mathfrak{B}_{r}})^{H} \mathbf{w}_{q'' i} s_{i, n}^{\ell_{i}} + n_{k, n}.
		  \end{aligned}
		\end{equation}
	\endgroup
	\begin{figure*}[ht] 
		\centering
		\begingroup\makeatletter\def\f@size{9}\check@mathfonts
			\begin{equation} \label{SINR_part}
				\begin{aligned}
					\gamma_{\ell_{k}, k}^{\mathrm{mixed}} = \frac{\left| \sum_{q \in \mathcal{Q}_{k}^{\ell_{k}}} \mathbb{E}\left\{ (\mathbf{h}_{q k}^{\mathfrak{B}_{r}})^{H} \mathbf{w}_{q k} \right\} \right|^2}{\sum\nolimits_{i = 1}^{K} \sum\nolimits_{\ell_{i} = 1}^{L_{i}} \mathbb{E}\left\{ \left| \sum_{q \in \mathcal{Q}_{i}^{\ell_{i}}} (\chi_{q k, n} \mathbf{h}_{q k}^{\mathfrak{B}_{r}})^{H} \mathbf{w}_{q i} \right|^2\right\} - \sum\nolimits_{t = 1}^{\ell_{k}} \left| \sum\nolimits_{q \in \mathcal{Q}_{k}^{t}} \mathbb{E}\left\{ (\mathbf{h}_{q k}^{\mathfrak{B}_{r}})^{H} \mathbf{w}_{q k} \right\} \right|^2 + \sigma_{\mathrm{dl}}^{2}} .
				\end{aligned}
			\end{equation}
		\endgroup
		\hrulefill
	\end{figure*}
    We treat the sum of the last four terms in~\eqref{last_decode} as uncorrelated noise, and its power $v_{\ell_{k},k}$ is
    \begingroup\makeatletter\def\f@size{8}\check@mathfonts
		\begin{equation}   \label{uncorrelated_noise}
		  \begin{aligned} 
			&\mathbb{E}\left\{ \left| v_{\ell_{k},k} \right|^2 \right\} = \mathbb{E} \left\{ \left| \sum\nolimits_{q \in \mathcal{Q}_{k}^{\ell_{k}}} \left( (\mathbf{h}_{q k}^{\mathfrak{B}_{r}})^{H} \mathbf{w}_{q k} - \mathbb{E} \left\{ (\mathbf{h}_{q k}^{\mathfrak{B}_{r}})^{H} \mathbf{w}_{q k} \right\} \right) \right|^2  \right\}	\\
			& \quad + \sum_{t=1}^{\ell_{k}-1} \mathbb{E} \left\{ \left| \sum\nolimits_{q \in \mathcal{Q}_{k}^{t}} \left( (\mathbf{h}_{q k}^{\mathfrak{B}_{r}})^{H} \mathbf{w}_{q k} - \mathbb{E} \left\{ (\mathbf{h}_{q k}^{\mathfrak{B}_{r}})^{H} \mathbf{w}_{q k} \right\} \right) \right|^2 \right\} \\
			& \quad + \sum\nolimits_{\ell_{k}' = \ell_{k}+1}^{L_{k}} \mathbb{E} \left\{ \left| \sum\nolimits_{q' \in \mathcal{Q}_{k}^{\ell_{k}'}} (\mathbf{h}_{q' k}^{\mathfrak{B}_{r}})^{H} \mathbf{w}_{q' k} \right|^2 \right\} + \sigma_{\mathrm{dl}}^{2}\\
			& \quad + \sum\nolimits_{i \neq k} \sum\nolimits_{\ell_{i} = 1}^{L_{i}} \mathbb{E} \left\{ \left| \sum\nolimits_{q'' \in \mathcal{Q}_{i}^{\ell_{i}}} (\chi_{q'' k, n} \mathbf{h}_{q'' k}^{\mathfrak{B}_{r}})^{H} \mathbf{w}_{q'' i} \right|^2 \right\} \\
			& = \sum\nolimits_{i = k}^{K} \sum\nolimits_{\ell_{i} = 1}^{L_{i}} \mathbb{E} \left\{ \left| \sum\nolimits_{q \in \mathcal{Q}_{i}^{\ell_{i}}} (\chi_{q k, n} \mathbf{h}_{q k}^{\mathfrak{B}_{r}})^{H} \mathbf{w}_{q i} \right|^2 \right\} \\
			& \quad - \sum\nolimits_{t = 1}^{\ell_{k} - 1} \left| \sum\nolimits_{q' \in \mathcal{Q}_{k}^{t}} \mathbb{E} \left\{ (\mathbf{h}_{q k}^{\mathfrak{B}_{r}})^{H} \mathbf{w}_{q k} \right\} \right|^2 + \sigma_{\mathrm{dl}}^{2}.
		  \end{aligned}
		\end{equation}
	\endgroup
	Then, we have $\mathrm{SE}_{\ell_{k}, k,n}^{\mathrm{mixed}} = \frac{\tau-\tau_{p}}{\tau} \frac{N}{N_{\mathrm{OFDM}}} \log_2(1+\gamma_{\ell_{k}, k}^{\mathrm{mixed}})$, where SINR $\gamma_{\ell_{k}, k}^{\mathrm{mixed}}$ is given in~\eqref{SINR_part}, shown at the bottom of this page.
    Similar to non-coherent transmission, the spectral efficiency of user $k$ $\mathrm{SE}_{k,n}^{\mathrm{mixed}} = \sum_{\ell_{k} = 1}^{L_{K}} \mathrm{SE}_{\ell_{k}, k,n}^{\mathrm{mixed}}$ is given by 
	\begingroup\makeatletter\def\f@size{9}\check@mathfonts
		\begin{equation} \label{SINR_sum}
			\begin{aligned}
				\mathrm{SE}_{k,n}^{\mathrm{mixed}} = \frac{\tau_{d}}{\tau} \frac{N}{N_{\mathrm{OFDM}}} \log_2\left( \prod_{\ell_{k} = 1}^{L_{K}} (1+\gamma_{\ell_{k}, k}^{\mathrm{mixed}})\right).
			\end{aligned}
		\end{equation}
	\endgroup
    By substituting~\eqref{SINR_part} into $\prod_{\ell_{k} = 1}^{L_{K}} (1+\gamma_{\ell_{k}, k}^{\mathrm{mixed}})$ of \eqref{SINR_sum}, we final obtain $\gamma_{k}^{\mathrm{mixed}}$.

	\bibliographystyle{IEEEtran}
	\bibliography{IEEEabrv,mybib}
	
\end{document}